\theoremstyle{remark}
\newcommand\ASTART{\bigskip\noindent\begin{minipage}[b]{0.5\linewidth}}
	\newcommand\AENDSKIP{\end{minipage}\bigskip}
\newcommand\AEND{\end{minipage}}
\let\MYoriglatexcaption\caption
\renewcommand{\caption}[2][\relax]{\MYoriglatexcaption[#2]{#2}}
\def\R{\mathbb{R}}
\theoremstyle{plain}
\newtheorem{thm}{\textbf{Theorem}}
\theoremstyle{definition}
\theoremstyle{remark}
\newcommand*{\rom}[1]{\expandafter\@slowromancap\romannumeral #1@}
\DeclarePairedDelimiter\floor{\lfloor}{\rfloor}
\newcommand{\norm}[1]{\left\|#1\right\|}
\begin{document}
%
\title{Adaptive Recovery of Dictionary-sparse Signals using Binary Measurements}
\author{Hossein~Beheshti, Sajad~Daei, Farzan~Haddadi
\thanks{H. Beheshti, S. Daei and F. Haddadi are with the School of 
Electrical Engineering, Iran University of Science \& Technology.}
}%
\maketitle

\begin{abstract}
One-bit compressive sensing (CS) is an advanced version 
of sparse recovery in which the sparse signal of interest 
can be recovered from extremely quantized measurements. 
Namely, only the sign of each measurement is available to us. 
In many applications, the ground-truth signal is not 
sparse itself, but can be represented in a redundant 
dictionary. A strong line of research has addressed 
conventional CS in this signal model including its 
extension to one-bit measurements. However, one-bit 
CS suffers from the extremely large number of required 
measurements to achieve a predefined reconstruction error level. 
A common alternative to resolve this issue is to exploit adaptive 
schemes. Adaptive sampling acts on the acquired samples to 
trace the signal in an efficient way. In this work, we 
utilize an adaptive sampling strategy to recover 
dictionary-sparse signals from binary measurements. For this task, 
a multi-dimensional threshold is proposed to incorporate the 
previous signal estimates into the current sampling procedure. 
This strategy substantially reduces the required number of 
measurements for exact recovery. Our proof approach is 
based on the recent tools in high dimensional geometry 
in particular random hyperplane tessellation and Gaussian width. 
We show through rigorous and numerical analysis that 
the proposed algorithm considerably outperforms state of 
the art approaches. Further, our algorithm reaches an 
exponential error decay in terms of the number of 
quantized measurements.
\end{abstract}

\begin{IEEEkeywords}
	One-bit, Dictionary-sparse signals, Adaptive measurement, High dimensional geometry.
\end{IEEEkeywords}

%
\IEEEpeerreviewmaketitle

\section{Introduction}
\IEEEPARstart{S}{ampling} a signal heavily depends 
on the prior information about the signal structure. 
For example, if one knows the signal of interest is 
band-limited, the Nyquist sampling rate is sufficient 
for exact recovery. Signals with most of its coefficients 
zero are called sparse. It has been turned out that sparsity is a 
powerful assumption that results in a significant reduction 
in the required number of measurements. The process of recovering a sparse signal from a few number of measurements is called compressed sensing (CS). In CS, the measurement vector is assumed to be a linear combination of the ground-truth signal i.e.,
\begin{align}
\label{eq:cs}
\bm{y} = \bm{A}\bm{x},
\end{align} 
where $ \bm{A}\in \R^{m\times N} $ is called the measurement matrix, whose random elements are drawn from a normal distribution and 
$ \bm{x}\in \R^{N} $ is an unknown $ s $-sparse 
signal i.e. it has at most $ s $ nonzero entries, i.e., $\| \bm{x}\|_{0} \leq s$.
Here, $\norm{\cdot}_{0}$ is the $ \ell_{0} $ norm which 
counts the number of non-zero elements. It is shown 
that $\mathcal{O}(s\log(\tfrac{N}{s})) $ measurements is sufficient to 
guarantee exact recovery of the signal,  
by solving the convex program:
\begin{align}
	\text P_{1}: \quad \min_{{\bm{z}}\in \R^{N} } 
	\quad \norm{{\bm{z}}}_{1} \quad \text{s.t.} \quad \bm{y}= \bm{A}\bm{z},
\end{align}
with high probability (see \cite{donoho2006compressed},\cite{candes2006compressive}).

Practical limitations enforce us to quantize the measurements 
in \eqref{eq:cs} as $\bm{y}=\mathcal{Q}(\bm{A x})$ where 
$\mathcal{Q}:\mathbb{R}^m\rightarrow \mathcal{A}^m$ is a 
non-linear operator that maps the measurements into a 
finite symbol alphabet $\mathcal{A}$. It is an interesting 
question that what is the result of extreme quantization? 
This question is addressed in \cite{boufounos20081} which 
states that signal reconstruction is still feasible using 
only one-bit quantized measurements. 

In one-bit compressed sensing, samples are taken 
as the sign of a linear transform of the signal $
\bm{y}= \text{sign}\left(\bm{A}\bm{x}\right)$.
This sampling scheme discards magnitude information. 
Therefore, we can only recover the direction of the signal. 
Fortunately, changing the threshold randomly in each 
measurement as $\tau_i \sim \mathcal N (0,1)$ conserves 
the amplitude information. Thus, the new sampling scheme reads $\bm{y}= \text{sign}\left(\bm{A}\bm{x}-\bm{\tau}\right)$.
While a great part of CS literature discusses 
sparse signals, most of the natural signals
are dictionary-sparse i.e. sparse in a transform domain. 
For instance, sinusoidal signals and natural image are 
sparse in Fourier and wavelet domains, respectively \cite{Candes2011}. 
This means that our signal of interest $\bm{f}\in\mathbb{R}^n$ 
can be expressed as $\bm{f}=\bm{D x}$ where $\bm{D}\in\mathbb{R}^{n\times N}$ is a 
redundant dictionary and $\bm{x}\in\mathbb{R}^N$ is a sparse vector. 
A common approach for recovering such signals, is to use the 
optimization problem 
\begin{align}
\text P_{1,\bm{D}}: \quad \min_{\bm{z}\in \R^{N}} \quad 
\norm{\bm{D}^{\rm H}\bm{z}}_{1} \quad \text{s.t.} \quad \bm{y}= \bm{A}\bm{z},
\end{align}
which is called $\ell_1$ analysis problem. 

In this work, we investigate a more practical situation where the signal of 
interest $\bm{f}$ is {\it effective} $s$-analysis-sparse which means that $\bm{f}$ satisfies $\|\bm{D}^{\rm H}\bm{f}\|_1\le \sqrt{s}\|\bm{D}^{\rm H}\bm{f}\|_2$.
In fact, perfect dictionary-sparsity is rarely satisfied in practice, 
since real-world signals of interest are only compressible in a domain. 
Our approach is adaptive which means that we incorporate previous 
signal estimates into the current sampling procedure. 
More explicitly, we solve the optimization problem
\begin{align}\label{eq:onebit+analysis}
 \min_{\bm{z}\in \R^{N}} ~ \norm{\bm{D}^{\rm H}\bm{z}}_{1} ~
 \text{s.t.} ~ \bm{y}:={\rm sign}(\bm{A}\bm{f}-\bm{\varphi})={\rm sign}(\bm{A}\bm{z}-\bm{\varphi}),
\end{align}
where $\bm{\varphi}\in\mathbb{R}^m$ is chosen adaptively 
based on previous estimations.
We propose a strategy to 
find a best effective $s$-analysis-sparse approximation to a signal in $\mathbb{R}^n$.
\subsection{Contributions}\label{subsection.Contribution}
In this section, we state our novelties in compared with
previous works. To highlight the contributions, we list them as below.
\begin{enumerate}
	\item \textbf{Proposing a novel algorithm for dictionary-sparse signals}: We 
	introduce an adaptive thresholding algorithm for reconstructing 
	dictionary-sparse signals in case of binary measurements. 
	The proposed algorithm provides accurate signal estimation even 
	in case of redundant and coherent dictionaries. The required 
	number of one-bit measurements considerably outperforms 
	the non-adaptive approach used in \cite{Baraniuk2017}.
	\item \textbf{Exponential decay of reconstruction error}: The error 
	of our algorithm exhibits exponential decaying behavior as 
	long as the number of adaptive stages sufficiently grows. 
	To be more precise, we obtain a near-optimal relation between 
	the reconstruction error and the required number of adaptive 
	stages. Written in mathematical form, if one takes the output 
	of our reconstruction algorithm by $\widehat{\bm{f}}$, 
	then, we show that $\|\widehat{\bm{f}}-\bm{f}\|_2\approx \mathcal{O}(\tfrac{1}{2^T})$,
	where $\bm{f}$ is the ground-truth signal and $T$ is 
	the number of stages in our adaptive algorithm 
	(see \ref{thm.main} for more details)
	\item \textbf{High dimensional threshold selection}: We propose an 
	adaptive high-dimensional threshold to extract the most information 
	from each sample, which substantially improves performance and 
	reduces the reconstruction error (see 
	\ref{subsection.HDT} for more explanations).
\end{enumerate}
\subsection{Prior Works and Key Differences}\label{subsection.PriorWork}
In this section, we review prior works about applying 
quantized measurements to CS framework\cite{boufounos20081,zymnis2010compressed,laska2012regime,jacques2013robust,plan2013one,kamilov2012one,baraniuk2017exponential,Baraniuk2017}. In what follows, we explain some of them.

The authors of 
\cite{boufounos20081} propose a heuristic algorithm 
to reconstruct the ground-truth sparse signal from 
extreme quantized measurements i.e. one bit measurements. In \cite{zymnis2010compressed}, it has been shown that 
conventional CS algorithms also works well when the measurements 
are quantized. In \cite{jacques2013robust} an algorithm 
with simple implementation is proposed. This algorithm posses 
less error in terms of hamming distance compared with the existing 
algorithms. Investigated from a geometric view, the authors 
of \cite{plan2013one}, exploit functional analysis tools to 
provide an almost optimal solution to the problem of one-bit CS.
They show that the number of required one-bit measurements 
is $\mathcal{O}(s\log^2(\tfrac{n}{s}))$. 
In a different approach, the work 
\cite{baraniuk2017exponential} proposes an adaptive quantization 
and recovery scenario making an exponential error decay in 
one-bit CS frameworks. Many of the techniques mentioned for 
adaptive sparse signal recovery do not generalize (at least 
not in an obvious strategy) to dictionary-sparse signal. For example, determining a surrogate of $\bm{f}$ that is supposed to be of lower-complexity with respect to $\bm{D}^{\rm H}$ is non-trivial and challenging. We should emphasize that, while the proofs and main parts in \cite{baraniuk2017exponential} relies on hard thresholding operator, it could not be used for either effective or exact dictionary sparse signals. This is due to that given a vector $\bm{x}$ in the analysis domain $\mathbb{R}^N$, one can not guarantee the existence of a signal $\bm{f}$ in $\mathbb{R}^n$ such that $\bm{D}^{\rm H}\bm{f}=\bm{x}$. 
Recently the work \cite{Baraniuk2017} shows, both direction 
and magnitude of a dictionary-sparse signal can be recovered 
by a convex program with strong guarantees. The 
work \cite{Baraniuk2017} has inspired our work for recovering 
dictionary-sparse signal in an adaptive manner.
In contrast to the existing 
method \cite{Baraniuk2017} for binary dictionary-sparse signal 
recovery which takes all of the measurements in one step with 
fixed settings, we solve the problem in an adaptive multi-stage way. 
In each stage, regarding the estimate from previous stage, 
our algorithm is propelled to the desired signal.
In the non-adaptive work \cite{Baraniuk2017}, the error rate 
is poorly large while in our work, the error rate 
exponentially decays with increasing the number of 
adaptive steps.

\textit{Notation}.\label{subsection.notation}
Here we introduce the notation used in the paper.
Vectors and matrices are denoted by boldface lowercase 
and capital letters, respectively.  
$ \bm{v}^T $ and $ \bm{v}^{\rm H} $ stand for 
transposition and hermitian of $ \bm{v} $, 
respectively. $ C$, and $c $ denote 
positive absolute constants which can be different 
from line to line. We use $ \norm{\bm{v}}_2= \sqrt{\sum_i |v_i|^2} $ 
for the $ \ell_2 $-norm of a vector $ \bm{v} $ in $ \R^n $, 
$ \norm{\bm{v}}_1= \sum_{i} |v_i|$ for the $\ell_1$-norm 
and $ \norm{\bm{v}}_{\infty}= \max_{i} |v_{i}| $ for 
the $ \ell_{\infty} $-norm. We write $ \mathbb{S}^{n-1}:= \{\bm{v}\in \R^n : \norm{\bm{v}}_{2}= 1 \} $   
for the unit Euclidean sphere in $ \R^n $.
For $\bm{x}\in \mathbb{R}^{n}$, we define $\bm{x}_S$ as the sub-vector
in $\mathbb{R}^{|S|}$ consisting of the entries indexed by the set $S$.
\begin{figure}[t!]
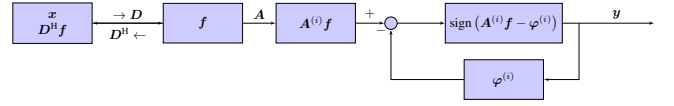

	\centering
	\includestandalone[scale=.5]{AdaptiveBD}
	\caption{Block diagram of adaptive sampling procedure.}
	\label{fig:MeasureBD}
\end{figure}
\section{Main result}\label{section.MainResult}
Our system model is built upon the optimization problem 
(\ref{eq:onebit+analysis}). A major part of this problem 
is to choose an efficient threshold $\bm{\varphi}\in\mathbb{R}^m$. 
To this end, we propose a closed-loop feedback system 
(see Figure \ref{fig:MeasureBD}) which exploits prior information 
from previous stages. Our adaptive approach consists of three algorithms $\mathsf{HDTG}$ (Algorithm \ref{alg:HDTG}), $\mathsf{AS}$ (Algorithm \ref{alg:AQ}), and $\mathsf{AR}$ (Algorithm \ref{alg:AR}). $\mathsf{HDTG}$ specifies the method of choosing the thresholds based on previous estimates. $\mathsf{AS}$ takes one-bit measurements of the signal by using $\mathsf{HDTG}$ and returns the one-bit samples and corresponding thresholds. Lastly, $\mathsf{AR}$ recovers the signal given the one bit measurements.
We provide a mathematical 
framework to guarantee our algorithm results in the following. 
\begin{thm}[Main theorem]\label{thm.main}
	Let $\bm{f}\in \mathbb{R}^n$ be the desired effective $s$-analysis-sparse
	signal with $\|\bm f\|_2\le r$ and $ \bm{A}\in \R^{m\times n} $ be 
	the measurement matrix with standard normal entries where $m$ is the total number 
	of measurements divided into $T$ stages. Assume that 
	\begin{align*}
	[\bm{y}, \bm{\varphi}]&=\mathsf{AS}(\bm{A},\bm{f},\bm{D},r,T),\\
	\widehat{\bm{f}}&=\mathsf{AR}(\bm{A},\bm{D}, \bm{y}, \bm{\varphi}, r,T)
	\end{align*}
	be the sampling and recovery algorithms introduced later in Algorithms \ref{alg:AQ} 
	and \ref{alg:AR}, respectively where $\bm{\varphi}$ is determined by Algorithm \ref{alg:HDTG}.
	Then, with probability at least $1- \gamma \exp(\tfrac{-cm}{T})$ over the 
	choice of $\bm{\varphi}$ and $\bm{A}$,
	the output of Algorithm \ref{alg:AR} satisfies
	\begin{align}
	\|\bm{f}-\hat{\bm{f}}\|_{2}\leq \epsilon r2^{1-T}.
	\end{align}
\end{thm}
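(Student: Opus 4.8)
The plan is to establish the exponential error decay by induction on the stage index $t = 1, \dots, T$, where the inductive hypothesis controls the distance $\|\bm{f} - \widehat{\bm{f}}^{(t)}\|_2$ between the ground-truth signal and the running estimate at the end of stage $t$. The heart of the argument is a single-stage contraction estimate: I would show that if $\widehat{\bm{f}}^{(t-1)}$ is within distance $\delta_{t-1}$ of $\bm{f}$, then, using the $m/T$ one-bit measurements allocated to stage $t$ together with the adaptively chosen threshold $\bm{\varphi}^{(t)}$ from Algorithm~\ref{alg:HDTG} (which recenters the measurements around the previous estimate), the new estimate $\widehat{\bm{f}}^{(t)}$ satisfies $\|\bm{f} - \widehat{\bm{f}}^{(t)}\|_2 \le \tfrac{1}{2}\delta_{t-1}$ with high probability. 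Iterating this halving from the initial bound $\delta_0 \le r$ (since $\|\bm{f}\|_2 \le r$) yields $\|\bm{f}-\widehat{\bm{f}}^{(T)}\|_2 \le \epsilon r\, 2^{1-T}$, which is exactly the claimed bound.

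First I would fix a single stage and analyze the convex program~\eqref{eq:onebit+analysis} restricted to the local region around $\widehat{\bm{f}}^{(t-1)}$. The key geometric fact is that one-bit measurements with Gaussian thresholds correspond to random affine hyperplanes, and exact recovery from such measurements is governed by how finely these hyperplanes tessellate the relevant descent set. The adaptive threshold shrinks the effective signal set at each stage to a ball of radius $\delta_{t-1}$ intersected with the cone of effective $s$-analysis-sparse directions, so I would invoke a random hyperplane tessellation bound: with $m/T$ hyperplanes, every pair of points in this localized set whose one-bit measurements agree must lie within distance $\tfrac{1}{2}\delta_{t-1}$, provided $m/T \gtrsim w^2$, where $w$ is the Gaussian width (or mean width) of the effective-analysis-sparse descent set associated with $\bm{D}^{\rm H}$. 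This is where the effective-sparsity hypothesis $\|\bm{D}^{\rm H}\bm{f}\|_1 \le \sqrt{s}\,\|\bm{D}^{\rm H}\bm{f}\|_2$ enters, since it bounds the Gaussian width by something like $\sqrt{s\log N}$ rather than the ambient dimension.

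The probability accounting then proceeds by a union bound over the $T$ stages: each stage succeeds with probability at least $1 - \gamma'\exp(-cm/T)$, so all $T$ stages succeed with probability at least $1 - T\gamma'\exp(-cm/T)$, which I would absorb into the stated $1 - \gamma\exp(-cm/T)$ by adjusting the constant $\gamma$ and noting $T$ is polynomially dominated by the exponential. A subtle point requiring care is that the measurement matrix $\bm{A}$ and thresholds are reused or freshly drawn across stages; I would use independent fresh randomness per stage (the $m$ measurements split into $T$ disjoint batches) so that the tessellation event at stage $t$ is independent of the estimate $\widehat{\bm{f}}^{(t-1)}$, which is measurable with respect to earlier batches, thereby justifying the conditional application of the tessellation bound.

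\textbf{The main obstacle} I anticipate is the single-stage contraction bound in the presence of a redundant, possibly coherent dictionary $\bm{D}$. In the standard sparse (identity-dictionary) setting the descent cone is well understood, but here the relevant set is the image under the constraint $\|\bm{D}^{\rm H}\bm{z}\|_1$ of the localized ball, and controlling its Gaussian width for arbitrary $\bm{D}$ is delicate, precisely the difficulty flagged in the discussion of~\cite{baraniuk2017exponential} that hard thresholding in the analysis domain need not correspond to any feasible signal in $\mathbb{R}^n$. I expect the resolution to lean on the effective-sparsity definition together with a width estimate for the analysis-$\ell_1$ descent set (as developed in~\cite{Baraniuk2017}), and I would verify that the adaptive recentering in $\mathsf{HDTG}$ genuinely localizes the feasible set so that the tessellation argument applies at the reduced scale $\delta_{t-1}$ rather than at the ambient scale $r$.
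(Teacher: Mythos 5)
Your overall skeleton --- stage-wise induction, recentering via $\mathsf{HDTG}$, a per-stage contraction at the halved scale, fresh measurement batches per stage, and a union bound over stages --- is exactly the structure of the paper's proof. The paper obtains the per-stage guarantee by invoking \cite[Theorem 8]{Baraniuk2017} as a black box on the residual $\bm{f}-\bm{f}_{i-1}$ (with the substitutions $r\leftarrow 2^{-i}r$, $\sigma\leftarrow 2^{-i}r$, $s\leftarrow 4s$), whereas you propose to re-derive that guarantee from hyperplane tessellation and Gaussian-width bounds; since that is precisely the machinery underlying the cited theorem, this difference is cosmetic and not the problem.

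The genuine gap is that your induction never engages with step 6 of Algorithms \ref{alg:AQ} and \ref{alg:AR} (the projection of $\bm{f}_{\rm tmp}=\bm{f}_{i-1}+\bm{\Delta}_i$ onto $\{\bm{z}:\|\bm{D}^{\rm H}\bm{z}\|_1\le\sqrt{s}r\}$), and without it the induction does not close. Your tessellation step needs the stage-$t$ residual $\bm{f}-\widehat{\bm{f}}^{(t-1)}$, together with the feasible points of the stage-$t$ program, to lie in a set of controlled complexity, which you describe as the ball of radius $\delta_{t-1}$ intersected with ``the cone of effective $s$-analysis-sparse directions.'' But effective analysis-sparsity, unlike exact sparsity, is not closed under differences: $\widehat{\bm{f}}^{(t-1)}$ is a sum of outputs of $\ell_1$-analysis programs, and absent an explicit projection there is no bound on $\|\bm{D}^{\rm H}\widehat{\bm{f}}^{(t-1)}\|_1$ keeping $\|\bm{D}^{\rm H}(\bm{f}-\widehat{\bm{f}}^{(t-1)})\|_1$ under control, so the low-complexity membership you assume is exactly what must be proven. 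This is why the algorithm contains the projection step: it enforces $\|\bm{D}^{\rm H}\bm{f}_{i-1}\|_1\le\sqrt{s}r$, which is what lets the paper assert that $\bm{f}-\bm{f}_{i-1}$ is effectively $4s$-analysis-sparse before applying \cite[Theorem 8]{Baraniuk2017}. The projection also forces a change in your bookkeeping: the paper needs the stage-$i$ cone program to achieve error $\epsilon 2^{-i}r$, and then uses the triangle inequality $\|\bm{f}-\bm{f}_i\|_2\le\|\bm{f}-\bm{f}_{\rm tmp}\|_2+\|\bm{f}_{\rm tmp}-\bm{f}_i\|_2\le 2\|\bm{f}-\bm{f}_{\rm tmp}\|_2$ (valid because $\bm{f}$ itself is feasible for the projection) to conclude $\|\bm{f}-\bm{f}_i\|_2\le\epsilon 2^{1-i}r$; your contraction by exactly $1/2$ per stage leaves no room for this unavoidable factor-$2$ inflation, and indeed the $\epsilon$ in your final bound appears with no source. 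In short, you correctly sensed that the dictionary setting is the hard part, but located the resolution in width estimates for the descent set, whereas the mechanism the algorithm and the paper's proof actually rely on is the effective-sparse surrogate (projection) step and its two consequences above.
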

\begin{proof}
	See Appendix \ref{proof.theoremmain}.
\end{proof}

A remarkable note is that if we only consider one stage 
i.e. $T=1$, the exponential behavior of our error bound 
disappears and reaches the state of art error bound  
\cite[Theorem 8]{Baraniuk2017}. In fact, the results of 
\cite{Baraniuk2017} is a special case of our work when the 
thresholds are non-adaptive. The term \textit{adaptivity} is related 
to the threshold updating and the measurement matrix $\bm{A}$ is fixed.

In what follows, we provide rigorous explanations about our proposed 
algorithms \ref{alg:HDTG}, \ref{alg:AQ} and \ref{alg:AR} in three items.
\begin{enumerate}
	
\item \label{subsection.HDT}\textbf{High Dimensional Threshold}:
Our algorithm for high dimensional threshold selection is given in Algorithm \ref{alg:HDTG}. 
The algorithm output consists of two parts: deterministic (i.e., $\bm{A}\widehat{\bm{f}}$) and random (i.e., $\bm{\tau}\in\mathbb{R}^q$) . The former transfers the origin to the previous signal estimate ($\widehat{\bm{f}}$) while the latter creates measurement dithers from the origin. (this dither is 
controlled by the variance parameter $\sigma^2$).
\begin{algorithm}[!h]
	\caption{$ \mathsf{HDTG} $: High dimension threshold generator}
	\label{alg:HDTG}
	\begin{algorithmic}[1]
		\renewcommand{\algorithmicrequire}{\textbf{Input:}}
		\renewcommand{\algorithmicensure}{\textbf{Output:}}
		\REQUIRE Mapping matrix $ \bm{A} $, 
		number of measurements $ q $, dithers variance $ \sigma^{2} $, signal estimation $ \widehat{\bm{f}} $.
		\ENSURE High dimension threshold vector $\bm{\varphi}\in \R^{q}$.
		\STATE $ \bm{\tau}\sim N(0,\sigma^{2}\bm{I}_{q} ) $
		\STATE  $ \bm{\varphi}=\bm{A}\widehat{\bm{f}}+\bm{\tau} $
		\RETURN $ \bm{\varphi} $
	\end{algorithmic} 
\end{algorithm}
\item \textbf{Adaptive Sampling}:
Our adaptive sampling algorithm is given in Algorithm \ref{alg:AQ}. 
To implement this algorithm, we need the dictionary 
$ \bm{D}\in\mathbb{R}^{n\times N} $,  the measurement 
matrix $ \bm{A}\in\mathbb{R}^{m\times n} $,  linear measurement 
$ \bm{A}\bm{f} $ and an over estimation of signal power 
$ r $ ($ \norm{\bm{f}}_{2}\leq r $). At the first stage, 
we initialize signal estimation to zero vector. We choose $T$ stages for our algorithm. 
At the $i~$th stage, the measurement matrix $\bm{A}^{(i)}$ is taken from 
the $i~$th row subset (of size $q:=\floor*{\tfrac{m}{T}}$) of $\bm{A}$. 
The adaptive sampling process consists 
of four essential parts. First, in step $2$ of the pseudo code, we generate the high dimensional thresholds 
using Algorithm \ref{alg:HDTG} by the parameters $\sigma^2=2^{1-i}r$
 and $\widehat{\bm{f}}_i$. 
Second, we compare the linear measurement block $\bm{A}^{(i)}\bm{f}$ with 
the threshold vector $\bm{\varphi}^{(i)}$ and obtain the sample 
vector $\bm{y}^{(i)}$ (step $3$ of Algorithm \ref{alg:AQ}). Third, we implement a second 
order cone program (steps $4$ and $5$) to find an approximate for $f$. However, this strategy does not often lead to an effective dictionary sparse signal. So, in step $6$, we devise a strategy to find a low-complexity approximation of $f$ with respect to operator $\bm{D}^{\rm H}$. In other words, we project the resulting signal onto the nearest effective $s$-analysis-sparse signal.
We refer to the third part of the adaptive sampling algorithm (steps 4-6) as single step recovery (SSR).
The estimated signal at each stage ($\bm{f}_{i}$) builds the deterministic part of our high dimensional 
threshold in step $2$. Finally, Algorithm \ref{alg:AQ} returns binary 
vectors $\{\bm{y}^{(i)}\}_{i=1}^T$ and the threshold vectors 
$\{\bm{\varphi}^{(i)}\}_{i=1}^T$ to the output.
\begin{algorithm}[!h]
	\caption{$ \mathsf{AS} $: Adaptive Sampling}
	\label{alg:AQ}
	\begin{algorithmic}[1]
		\renewcommand{\algorithmicrequire}{\textbf{Input:}}
		\renewcommand{\algorithmicensure}{\textbf{Output:}}
		\REQUIRE Dictionary $ \bm{D}\in\R^{n\times N} $, 
		measurement matrix $ \bm{A}\in\R^{m\times n} $, 
		linear measurement $ \bm{Af}\in \R^{m} $, 
		norm estimation $ \norm{\bm{f}}_{2}\leq r $, 
		number of blocks $ T $.
		\ENSURE  Quantized measurements $ \bm{y} \in \lbrace\pm 1\rbrace^{m} $, 
		high dimension thresholds $ \bm{\varphi} \in \R^{m} $
		\\ \textit{Initialization} : $ \bm{f}_{0}\leftarrow \bm{0} $, 
		$ q = \floor*{\tfrac{m}{T}}  $, $ \bm{A}^{(i)}\in \R^{q\times n}  $
		\FOR {$i = 1,\cdots,T$}
		\STATE $ \bm{\varphi}^{\left(i\right)}\leftarrow \mathsf{HDTG}(\bm{A}^{(i)},q,2^{1-i}r,\bm{f}_{i-1}) $
		\STATE $ \bm{y}^{\left(i\right)} = \text{sign}\left(\bm{A}^{(i)}\bm{f}-\bm{\varphi}^{\left(i\right)}\right)$
		\STATE 
		$ \bm{\Delta}_{i}\leftarrow \mathop{\arg\min}_{\bm{z}\in \R^{n} } \|\bm{D}^{\rm H}\bm{z}\|_{1}$ \\ 
		\quad $ ~s.t.~   \bm{y}^{\left(i\right)}=\text{sign}\left(\bm{A}^{\left(i\right)}\bm{z}-\bm{\varphi}^{\left(i\right)}\right), \: \norm{\bm{z}}_{2}\leq 2^{1-i}r $ 
		\STATE $ \bm{f}_{tmp} = \bm{f}_{i-1}+\bm{\Delta}_{i}$
		\STATE 
		$ \bm{f}_{i}\leftarrow \mathop{\arg\min}_{\bm{z}\in \R^{n} } \|\bm{z}-\bm{f}_{tmp}\|_{2}$  
		$ ~s.t.~   \: \norm{\bm{D}^{\rm H}\bm{z}}_{1}\leq \sqrt{s}r $ 
		\ENDFOR
		\RETURN $ \bm{y}^{\left(i\right)},\bm{\varphi}^{\left(i\right)} $ for $ i=1,\cdots,T $
	\end{algorithmic} 
\end{algorithm}
\begin{algorithm}[!h]
	\caption{$ \mathsf{AR}$: Adaptive Recovery}
	\label{alg:AR}
	\begin{algorithmic}[1]
		\renewcommand{\algorithmicrequire}{\textbf{Input:}}
		\renewcommand{\algorithmicensure}{\textbf{Output:}}
		\REQUIRE Dictionary $ \bm{D}\in\R^{n\times N} $, measurement matrix $ \bm{A}\in\R^{m\times n} $,
		quantized measurements $ \bm{y} \in \lbrace\pm 1\rbrace^{m} $,
		high dimension thresholds $ \bm{\varphi}\in\R^{m} $,
		norm estimation $ \norm{\bm{f}}_{2}\leq r $, 
		number of blocks $ T $.
		\ENSURE  Estimated vector $ \hat{\bm{f}}\in \R^{n} $.
		\\ \textit{Initialization} :  $ \bm{f}_{0}\leftarrow \bm{0} $,
		$ q = \floor*{\tfrac{m}{L}}  $, $ \bm{A}^{(i)}\in \R^{q\times n}$,
		$ \bm{y}^{\left(i\right)}\in \lbrace\pm 1\rbrace^{q}$, 
		$\bm{\varphi}^{\left(i\right)}\in \R^{q} $
		\FOR {$i = 1,\cdots,T$}
		\STATE
		$ \bm{\Delta}_{i}\leftarrow \mathop{\arg\min}_{\bm{z}\in \R^{n} } \norm{\bm{D}^{\rm H}\bm{z}}_{1}$ \\
		\quad $ ~s.t.~   \bm{y}^{\left(i\right)}=\text{sign}\left(\bm{A}^{\left(i\right)}\bm{z}-\bm{\varphi}^{\left(i\right)}\right), \: \norm{\bm{z}}_{2}\leq 2^{1-i}r $ 
		\STATE $ \bm{f}_{tmp} = \bm{f}_{i-1}+\bm{\Delta}_{i}$
		\STATE 
		$ \bm{f}_{i}\leftarrow \mathop{\arg\min}_{\bm{z}\in \R^{n} } \|\bm{z}-\bm{f}_{tmp}\|_{2}$  
		$ ~s.t.~   \: \norm{\bm{D}^{\rm H}\bm{z}}_{1}\leq \sqrt{s}r $ 
		\ENDFOR
		\RETURN $ \bm{f}_{T} $
	\end{algorithmic} 
\end{algorithm}
\item \textbf{Adaptive Recovery}:
In the recovery procedure (Algorithm \ref{alg:AR}), we need 
the dictionary $ \bm{D} $, the measurement matrix $ \bm{A} $, 
binary measurements vector $ \bm{y} $, high dimensional threshold 
vector $ \bm{\varphi} $ and an upper norm estimation of signal $ r $. 
In the adaptive recovery algorithm, we first divide the inputs ($\bm{y}$ 
and $\bm{A}$) into $ T $ blocks (i.e., $\bm{y}^{(i)}$ and $\bm{A}^{(i)}$). 
Then, we simply implement SSR on each block. The output of SSR in the last stage is
the final recovered signal (i.e., $\bm{f}_T$).
\end{enumerate}
\section{Numerical Experiments}\label{section.sim}
In this section, we investigate the performance of our algorithm 
and compare it with the two previous 
one-bit dictionary-sparse recovery given by \cite{Baraniuk2017}. 
The first algorithm solves linear programming optimization 
(LP) \cite[Subsection ~4.1]{Baraniuk2017} and the second 
algorithm solves a second-order cone programming (CP) optimization \cite[Subsection ~4.2]{Baraniuk2017}.
First, we construct a matrix where its columns 
are drawn randomly and independently from $ \mathbb{S}^{n-1} $. Then, the dictionary $ \bm{D}\in \R^{n\times N} $ ($ N=1000,n=50 $) is defined as an orthonormal basis of this matrix. Then, the signal $\bm{f}$ is generated as $\bm{f}=\bm{B}\bm{c}$ where $\bm{B}$ is a basis of ${\rm null}(\bm{D}_{\overline{\mathcal{S}}})$ and $c$ is drawn from standard normal distribution. Here, $\overline{\mathcal{S}}$ is used to denote the complement of the support of $\bm{D}^{H}\bm{f}$. The elements of $\bm{A}$ are chosen from i.i.d. standard normal distribution. We set
$ r= 2\|\bm{f}\|_{2} $, $ \sigma=r $, and the number of stages to $ T=10 $. Define the normalized reconstruction error as 
$\tfrac{\|\bm{f}-\hat{\bm{f}}\|_{2}}{\|\bm{f}\|_{2}}$. The results in right block of
Figure \ref{fig:SimFig1} are obtained by implementing Algorithms 
\ref{alg:AQ} and \ref{alg:AR} $100$ times and taking the average of 
the normalized reconstruction error. As it is clear from right block of Figure \ref{fig:SimFig1}, 
LP algorithm outperforms CP
by $ 2~dB $ on average. Our algorithm with few measurements
behaves slightly weaker than others. However, it seems that there 
is a phase transition behavior in our algorithm when the number of measurements 
increases. In fact, after a certain number of measurements, our proposed 
algorithm substantially outperforms both LP and CP 
(over $ 50~dB $ in the steady-state condition). 
\begin{figure}[t]
	\centering
	\includegraphics[scale=.35]{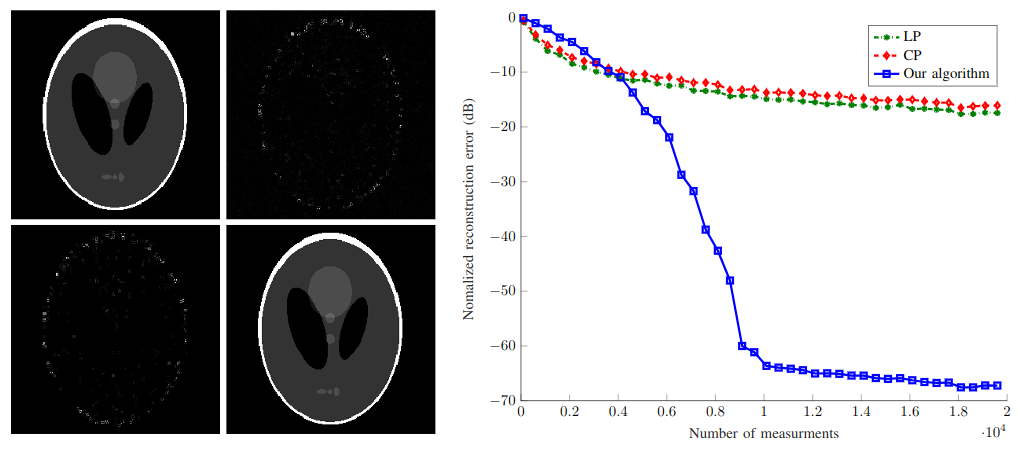}
	\caption{Left block: 
	Reconstruction of Shepp-Logan 
	phantom image of size $256\times 256$ where the picture is split
	into $64$ blocks of size $ 32 \times 32 $, each requires $ m=10^5 $ measurements. Top left image: Ground truth image ($\text{PSNR}= \infty~dB$). Top right image: Recovered image via LP ($\text{PSNR}= 12.02~dB$). Bottom left image: Recovered image via CP ($\text{PSNR}= 11.98~dB$). Bottom right image: Recovered image via our algorithm ($\text{PSNR}= 118.25~dB$).
		Right block: Reconstruction of a dictionary-sparse signal  $ \bm{x}\in \R^{50} $ in dictionary $ \bm{D}\in \R^{50\times 1000} $. The plot shows the reconstruction 
	 error averaged over $100$ Monte Carlo simulations for Algorithms LP, CP, and our proposed algorithm.
	}
	\label{fig:SimFig1}
\end{figure}
In the second experiment, we consider the Shepp-Logan phantom 
image as the ground-truth signal. Since 
the required number of measurements in one-bit CS is almost high (see e.g. \cite{baraniuk2017exponential} and \cite[Section 4.2]{needell2017weighted} ), we split the picture into multiple blocks 
of size $ 32\times 32 $ and process each block separately. We use redundant wavelet dictionary 
and $ 10^5 $ Gaussian measurements to recover each block. We evaluate the reconstruction quality of final 
result in terms of the peak signal to noise ratio (PSNR) given by $\text{PSNR}\left(\bm{X},\widehat{\bm{X}}\right) = 20 
\log_{10}\left(\tfrac{\|\bm{X}\|_{\infty}256}{\|\bm{X}-\widehat{\bm{X}}\|_{F}}\right),$
where $ \bm{X} $ and $ \widehat{\bm{X}} $ are the true and estimated images.
As shown by left block of Figure \ref{fig:SimFig1}, LP and CP algorithms 
clearly fail with a poor performance but as it is evident in the bottom right image (left block of Figure \ref{fig:SimFig1}), the output of our algorithm is almost similar to the desired picture.
\appendix
\section{Proof}
\label{appndx}
\subsection{Proof of Theorem \ref{thm.main}}\label{proof.theoremmain}
\begin{proof}
By induction law, we show that
\begin{align}
\label{eq:IndHpt}
\|\bm{f}-\bm{f}_{i}\|_{2}\leq \epsilon r2^{1-i}
\end{align}
holds with high probability for any $i\in\{1,\dots,T\}$. 
Consider the first step i.e. $i=1$ in Algorithm \ref{alg:AQ}. At this step, 
our initial estimate $\bm{f}_0$ is equal to $\bm{0}$. Thus, the output of 
Algorithm \ref{alg:HDTG} only contains the random part of high dimensional 
thresholds (step 2 of Algorithm \ref{alg:AQ}). Then, we obtain $\bm{f}_1$ by 
using steps 4-9 of Algorithm \ref{alg:AQ} (except that we assume $\sigma=r$ in 
Algorithm \ref{alg:AQ}). As a result, to verify $\|\bm{f}-\bm{f}_{1}\|_{2}\leq \epsilon r$, we use \cite[Theorem 8]{Baraniuk2017}.
Now suppose that the result \eqref{eq:IndHpt} holds in the $(i-1)$-th step, i.e,
\begin{align}
	\label{eq:IndHpt2}
	\|\bm{f}-\bm{f}_{i-1}\|_{2}\leq \epsilon r2^{2-i}.
\end{align}
Consider $i$-th stage of Algorithm \ref{alg:AQ} where the
high dimension thresholds and the measurements are obtained as
\begin{align}
	\label{eq:proof1}
	\bm{\varphi}^{\left(i\right)}\leftarrow \Phi(\bm{A}^{(i)},q,2^{1-i}r,\bm{f}_{i-1}),
\end{align}
\begin{align}
	\label{eq:proof2}
	\bm{y}^{\left(i\right)} = \text{sign}\left(\bm{A}^{(i)}\bm{f}-\bm{\varphi}^{\left(i\right)}\right).
\end{align}
By substituting \eqref{eq:proof1} in \eqref{eq:proof2}, we reach:
‌\begin{align}
	\label{eq:proof3}
	\bm{y}^{\left(i\right)} = \text{sign}\left(\bm{A}^{(i)}(\bm{f}-\bm{f}_{i-1})-\bm{\tau}^{(i)}\right).
\end{align}	
Since $\bm{f}$ is effective $s$-analysis-sparse and the output of Algorithm \ref{alg:AQ} at 
the $(i-1)$-th stage, i.e. $\bm{f}_{i-1}$ is effectively $s$-analysis-sparse, 
the signal $\bm{f}-\bm{f}_{i-1} $ is effectively $4s$-analysis-sparse. 
Now, we can apply \cite[Theorem 8]{Baraniuk2017} to this signal. In simple words, we set
\begin{align}
&\bm{f}\leftarrow \bm{f}-\bm{f}_{i-1}, r\leftarrow 2^{-i}r, \sigma\leftarrow 2^{-i}r, s\leftarrow 4s
\end{align}
in \cite[Theorem 8]{Baraniuk2017}. As a result, we shall have that
‌\begin{align}
\label{eq:proof4}
\norm{\bm{f}-\bm{f}_{i-1}-\bm{\Delta}_{i}}_{2} \leq \epsilon 2^{-i}r,
\end{align}
with probability at least $1- \gamma \exp(-c^{\prime}q)$. Now, suppose 
that \eqref{eq:proof4} occurs. Consider
\begin{align}
\label{eq:proof5}
\bm{f}_{\rm tmp}=\bm{f}_{i-1}+\bm{\Delta}_{i}.
\end{align}
After applying step 6 of Algorithm \ref{alg:AQ}, we obtain $\bm{f}_i$ that has the property
$$\|\bm{D}^{\rm H}\bm{f}_i\|_1\le \sqrt{s}r.$$
Then, we have
\begin{align}
\|\bm{f}-\bm{f}_i\|_2\le\|\bm{f}-\bm{f}_{\rm tmp}\|_2+\|\bm{f}_{\rm tmp}-\bm{f}_i\|_2.
\end{align}
Finally, by using \eqref{eq:proof4}, \eqref{eq:proof5}, and the fact that $\|\bm{f}_i-\bm{f}_{\rm tmp}\|_2\le \|\bm{f}-\bm{f}_{\rm tmp}\|_2$ (step 6),  the latter equation becomes
\begin{align}
\|\bm{f}-\bm{f}_i\|_2\le \epsilon 2^{1-i}r
\end{align}
Since we consider $T$ stage in Algorithm \ref{alg:AQ} and \ref{alg:AR}, we reach the error bound:
‌\begin{align}
	\label{eq:proof6}
	\norm{\bm{f}-\bm{f}_{T}}_{2} \leq \epsilon r 2^{1-T}.
\end{align}
\end{proof}

\ifCLASSOPTIONcaptionsoff
\newpage
\fi

\bibliographystyle{ieeetr}
\bibliography{HBReference}
\end{document}